\newtheorem{theorem}{Theorem}
\newtheorem{lemma}[theorem]{Lemma}
\newtheorem{example}[theorem]{Example}
\begin{document}
 
\title{The geometry of Hermitian self-orthogonal codes}
\author{Simeon Ball and Ricard Vilar\thanks{17 August 2021}}
 
\date{}

\maketitle

\begin{abstract}
We prove that  if $n >k^2$ then a $k$-dimensional linear code of length $n$ over ${\mathbb F}_{q^2}$ has a truncation which is linearly equivalent to a Hermitian self-orthogonal linear code. In the contrary case we prove that truncations of linear codes to codes equivalent to Hermitian self-orthogonal linear codes occur when the columns of a generator matrix of the code do not impose independent conditions on the space of Hermitian forms. In the case that there are more than $n$ common zeros to the set of Hermitian forms which are zero on the columns of a generator matrix of the code, the additional zeros give the extension of the code to a code that has a truncation which is equivalent to a Hermitian self-orthogonal code.
\end{abstract}

\section{Introduction}

The main motivation to study Hermitian self-orthogonal codes is their application to quantum error-correcting codes. The most prevalent and applicative quantum codes are qubit codes, in which the quantum state is encoded on $n$ quantum particles with two-states. In this case, the quantum code is a subspace of $({\mathbb C}^2)^{\otimes n}$. More generally, a quantum code is a subspace of $({\mathbb C}^q)^{\otimes n}$. The parameter $q$ is called the {\em local dimension} and corresponds to the number of states each quantum particle of the system has. A qubit is then referred to as a quqit.

A quantum code with minimum distance $d$ is able to detect errors, which act non-trivially on the code space, on up to $d-1$ of the quqits and correct errors on up to $\frac{1}{2}(d-1)$ of the quqits. If the code encodes $k$ logical quqits onto $n$ quqits then we say the quantum code is an $[\![ n,k,d ]\!] _q$ code. It has dimension $q^k$.

Suppose that $q=p^h$ is a prime power and let ${\mathbb F}_q$ denote the finite field with $q$ elements. A linear code $C$ of length $n$ over ${\mathbb F}_q$ is a subspace of ${\mathbb F}_q^n$. If the minimum weight of a non-zero element of $C$ is $d$ then the minimum (Hamming) distance between any two elements of $C$ is $d$ and we say that $C$ is a $[n,k,d]_q$ code, where $k$ is the dimension of the subspace $C$. If we do not wish to specify the minimum distance then we say that $C$ is a $[n,k]_q$ code.

A canonical Hermitian form on ${\mathbb F}_{q^2}^n$ is given by
$$
(u,v)_h=\sum_{i=1}^n u_iv_i^q.
$$
If $C$ is a linear code over ${\mathbb F}_{q^2}$ then its {\em Hermitian dual} is defined as
$$
C^{\perp_h}=\{ v \in {\mathbb F}_{q^2}^n \ | \ (u,v)_h=0, \ \mathrm{for} \ \mathrm{all} \ u \in C \}.
$$
The standard dual of $C$ will be denoted by $C^{\perp}$. Observe that $v \in C^{\perp}$ if and only if $v^q \in C^{\perp_h}$, so both of the dual codes have the same weight distribution.

One very common construction of quantum stabiliser codes relies on the following theorem from Ketkar et al. \cite[Corollary 19]{KKKS2006}. It is a generalisation from the qubit case of a construction introduced by Calderbank et al. \cite[Theorem 2]{CRSS1998}.

\begin{theorem} \label{sigmaortog}
If there is a $[n,k]_{q^2}$ linear code $C$ such that $C \leqslant C^{\perp_h}$ then there exists an $ [\![ n,n-2k,d]\!] _q$ quantum code, where $d$ is the minimum weight of the elements of $C^{\perp_h} \setminus C$ if $k \neq \frac{1}{2}n$ and $d$  is the minimum weight of the non-zero elements of $C^{\perp_h}=C$ if $k=\frac{1}{2}n$.
\end{theorem}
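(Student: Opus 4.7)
The plan is to realise the quantum code as the joint $+1$ eigenspace of an abelian subgroup of the $n$-quqit Pauli group, translating Hermitian self-orthogonality of $C$ into commutativity of that subgroup. Writing $q=p^h$, fix a primitive $p$-th root of unity $\omega$ and index a basis of $\mathbb{C}^q$ by $\mathbb{F}_q$, say $\{e_j:j\in\mathbb{F}_q\}$. Define the Weyl operators $X(a)e_j=e_{j+a}$ and $Z(b)e_j=\omega^{\mathrm{tr}(bj)}e_j$, where $\mathrm{tr}:\mathbb{F}_q\to\mathbb{F}_p$ is the absolute trace, and set $E(\mathbf{a},\mathbf{b})=X(a_1)Z(b_1)\otimes\cdots\otimes X(a_n)Z(b_n)$ for $(\mathbf{a},\mathbf{b})\in\mathbb{F}_q^n\times\mathbb{F}_q^n$. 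A routine check shows any two such operators commute up to a scalar, and that $E(\mathbf{a},\mathbf{b})$ commutes with $E(\mathbf{a}',\mathbf{b}')$ precisely when $\mathrm{tr}(\mathbf{a}\cdot\mathbf{b}'-\mathbf{a}'\cdot\mathbf{b})=0$.

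Next, I would identify $\mathbb{F}_{q^2}^n$ with $\mathbb{F}_q^n\times\mathbb{F}_q^n$ by fixing a basis $\{1,\beta\}$ of $\mathbb{F}_{q^2}/\mathbb{F}_q$ and writing $u_i=a_i+\beta b_i$. A direct calculation shows that, up to a nonzero constant depending only on $\beta$, the quantity $(u,v)_h-(v,u)_h^q$ equals the symplectic pairing $\mathbf{a}\cdot\mathbf{b}'-\mathbf{a}'\cdot\mathbf{b}$ lifted to $\mathbb{F}_q$. Since $(u,v)_h=0$ for $u,v\in C\leqslant C^{\perp_h}$ also forces $(v,u)_h=(u,v)_h^q=0$, the Pauli operators associated to vectors in $C$ pairwise commute. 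Moreover $u_i=0$ iff $a_i=b_i=0$, so the Hamming weight of $u\in\mathbb{F}_{q^2}^n$ equals the number of positions in which $E(\mathbf{a}_u,\mathbf{b}_u)$ acts non-trivially.

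Given $C\leqslant C^{\perp_h}$ of dimension $k$ over $\mathbb{F}_{q^2}$, I would let $S$ be the subgroup of the Pauli group generated by $\{E(\mathbf{a}_u,\mathbf{b}_u):u\in C\}$ together with the centre. By the previous step $S$ is abelian, and modulo scalars has order $q^{2k}$. Its joint $+1$ eigenspace $Q\subseteq(\mathbb{C}^q)^{\otimes n}$ then has dimension $q^n/q^{2k}=q^{n-2k}$ by the standard stabiliser formalism. An operator $E(\mathbf{a}',\mathbf{b}')$ is an undetectable error on $Q$ exactly when it commutes with every element of $S$ but does not act on $Q$ by a scalar; by the correspondence above, commuting with $S$ means the associated vector $v$ lies in $C^{\perp_h}$, and acting as a scalar means $v\in C$. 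Hence the minimum distance is $\min\{\mathrm{wt}(v):v\in C^{\perp_h}\setminus C\}$ when $k\neq n/2$, and the minimum non-zero weight of $C^{\perp_h}=C$ in the self-dual case.

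The main obstacle is the algebraic bookkeeping in the second paragraph: one must check that the chosen basis $\{1,\beta\}$ sends the trace of the Hermitian form to the symplectic form governing Pauli commutation, carefully tracking the Frobenius coming from the $v_i^q$ in $(u,v)_h=\sum u_iv_i^q$, and that no spurious central scalars arise when lifting from vectors to operators. Once this identification is in place, the remainder is the standard Gottesman stabiliser construction applied to the image of $C$ in $\mathbb{F}_q^{2n}$.
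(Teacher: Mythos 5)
The paper does not prove this statement: it is quoted verbatim from Ketkar et al.\ \cite[Corollary 19]{KKKS2006}, so your proposal can only be judged against the standard proof there, of which it is essentially a reconstruction (Weyl operators, trace-symplectic commutation, stabiliser formalism). The architecture is the right one, but the key identity in your second paragraph is false as written. For $u,v\in{\mathbb F}_{q^2}^n$ one has $(v,u)_h^q=\sum v_i^qu_i^{q^2}=\sum u_iv_i^q=(u,v)_h$, so the quantity $(u,v)_h-(v,u)_h^q$ is \emph{identically zero} and cannot equal the symplectic pairing up to a non-zero constant; indeed your very next sentence invokes the identity $(v,u)_h=(u,v)_h^q$ that kills it. The quantity you want is the alternating part $(u,v)_h-(v,u)_h=(u,v)_h-(u,v)_h^q$, which satisfies $y^q=-y$ and hence becomes an element of ${\mathbb F}_q$ after division by $\beta-\beta^q$; composing with $\mathrm{tr}_{q/p}$ and expanding $u_i=a_i+\beta b_i$ is what produces the symplectic form governing Pauli commutation. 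This is a repairable slip, but as stated the central computation does not go through.

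The second point you gloss over is where the Hermitian (rather than merely additive) structure is actually used. Commutation of $E(\mathbf{a}',\mathbf{b}')$ with all of $S$ only says that $v$ lies in the \emph{trace-alternating} dual $C^{\perp_a}$ of $C$ viewed as an ${\mathbb F}_q$-subspace; for a general additive code this is strictly larger than $C^{\perp_h}$, and the minimum-distance claim would fail. You need the lemma (Lemma 18 of Ketkar et al.) that for ${\mathbb F}_{q^2}$-\emph{linear} $C$ one has $C^{\perp_a}=C^{\perp_h}$: the containment $C^{\perp_h}\subseteq C^{\perp_a}$ is immediate, and equality follows either by comparing ${\mathbb F}_q$-dimensions (both equal $2(n-k)$) or directly by running the orthogonality condition over $\alpha u$ for all $\alpha\in{\mathbb F}_{q^2}$ to recover $(u,v)_h=0$. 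With that lemma made explicit, together with the standard choice of phases ensuring $S$ meets the centre trivially (which you correctly flag, and which requires extra care when $p=2$), your argument matches the cited proof and is complete.
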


If $C \leqslant C^{\perp_h}$ then we say the linear code $C$ is {\em Hermitian self-orthogonal}. Theorem~\ref{sigmaortog} is our motivation to study Hermitian self-orthogonal codes. We can scale the $i$-th coordinate of all the elements of $C$ by a non-zero scalar $v_i$, without altering the parameters of the code. Such a scaling, together with a reordering of the coordinates, gives a code which is said to be {\em linearly equivalent} to $C$. 

Thus, a linear code $D$ is {\em linearly equivalent} to a linear code $C$ over ${\mathbb F}_q$ if, after a suitable re-ordering of the coordinates, there exist non-zero $\theta_i \in {\mathbb F}_q$ such that 
$$
D=\{ (\theta_1u_1,\ldots,\theta_n u_n) \ | \ (u_1,\ldots,u_n) \in C\}.
$$
A {\em truncation} of a code is a code obtained from $C$ by deletion of coordinates. Observe that a truncation can reduce the dimension of the code but the dual minimum distance can only increase. 

We will be interested in the following question: Given a linear $[n,k,d]_q$ code $C$, what truncations does $C$ have which are linearly equivalent to Hermitian self-orthogonal codes? 

In the special case that $C$ is a $k$-dimensional Reed-Solomon code, the above question was answered by the authors in \cite{BV2021}. The code $C$ has a truncation of length $m \leqslant q^2$ which is linearly equivalent to a Hermitian self-orthogonal code if and only if there is a polynomial $g(X) \in {\mathbb F}_{q^2}[X]$ of degree at most $(q-k)q-1$ with the property that
$g(x)+g(x)^q$, when evaluated at the elements $x \in {\mathbb F}_{q^2}$, has precisely $q^2+1-m$ zeros.

\section{Hermitian self-orthogonal codes} \label{puncturecodesection}

Let $C$ be a linear code of length $n$ over ${\mathbb F}_{q^2}$. We have that $C$ is linearly equivalent to a Hermitian self-orthogonal code if and only if there are non-zero $\theta_i \in {\mathbb F}_{q^2}$ such that
$$
\sum_{i=1}^n \theta_{i}^{q+1}u_{i}v_{i}^q=0,
$$
for all $u,v \in C$. Note that $\theta_{i}^{q+1}$ is an non-zero element of ${\mathbb F}_q$, so equivalently 
$C$ is linearly equivalent to a Hermitian self-orthogonal code if and only if there are non-zero $\lambda_i \in {\mathbb F}_{q}$ such that
$$
\sum_{i=1}^n \lambda_{i} u_{i}v_{i}^q=0.
$$

For any linear code $C$ over ${\mathbb F}_{q^2}$ of length $n$, Rains \cite{Rains1999} defined the {\em puncture code} $P(C)$ to be
$$
P(C)=\{\lambda=(\lambda_{1},\ldots,\lambda_n) \in {\mathbb F}_q^n \ | \  \sum_{{i}=1}^n \lambda_{i} u_{i}v_{i}^q=0, \ \mathrm{for} \ \mathrm{all} \ u,v \in C \}.
$$
Then, clearly we have the following theorem.

\begin{theorem} \label{puncturecodethm}
There is a truncation of a linear code $C$ over ${\mathbb F}_{q^2}$ of length $n$ to a linear over ${\mathbb F}_{q^2}$ of length $r \leqslant n$ which is linearly Hermitian self-orthogonal code if and only if there is an element of $P(C)$ of weight $r$.
\end{theorem}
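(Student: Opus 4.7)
The plan is to prove both directions of the equivalence by unpacking the characterization stated just before the theorem, namely that a code $D$ over ${\mathbb F}_{q^2}$ of length $r$ is linearly equivalent to a Hermitian self-orthogonal code if and only if there exist non-zero $\mu_1,\dots,\mu_r \in {\mathbb F}_q$ with $\sum_{i=1}^r \mu_i u_i v_i^q =0$ for all $u,v \in D$. The whole argument is essentially bookkeeping: the support of a vector $\lambda \in P(C)$ corresponds to the set of coordinates retained in a truncation, and the weight of $\lambda$ corresponds to the length of that truncation.

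For the forward direction, I would suppose that puncturing $C$ to a set $S \subseteq \{1,\dots,n\}$ of coordinates, with $|S|=r$, yields a code $C|_S$ that is linearly equivalent to a Hermitian self-orthogonal code. By the characterization above, there exist non-zero $\mu_i \in {\mathbb F}_q$, indexed by $i \in S$, such that $\sum_{i \in S} \mu_i u_i v_i^q = 0$ for all $u,v \in C$ (because every pair $u',v'$ in $C|_S$ is the restriction of some pair $u,v \in C$, and only the coordinates in $S$ matter). I would then define $\lambda \in {\mathbb F}_q^n$ by $\lambda_i = \mu_i$ for $i \in S$ and $\lambda_i = 0$ otherwise. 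This $\lambda$ satisfies the defining relation of $P(C)$, lies in $P(C)$, and has weight exactly $r = |S|$.

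For the reverse direction, I would start with $\lambda \in P(C)$ of weight $r$ and let $S$ be its support, so $|S|=r$. Restricting to coordinates in $S$ gives the truncation $C|_S$, which has length $r$. Since $\lambda_i = 0$ for $i \notin S$, the defining relation of $P(C)$ collapses to $\sum_{i \in S} \lambda_i u_i v_i^q = 0$ for all $u,v \in C$, and hence for all elements of $C|_S$. Because the $\lambda_i$ for $i \in S$ are all non-zero, the characterization applied in the opposite direction shows $C|_S$ is linearly equivalent to a Hermitian self-orthogonal code.

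There is no real obstacle here; the theorem is essentially a restatement of the definition of $P(C)$ combined with the preliminary reformulation of linear equivalence to a Hermitian self-orthogonal code, which is why the author labels it as clear. The only point requiring minor care is the observation that truncation and subsequent linear equivalence can be packaged simultaneously by allowing the scaling scalars $\lambda_i$ to be zero exactly on the deleted positions, so that weight $r$ of $\lambda$ encodes precisely the length of the surviving truncation.
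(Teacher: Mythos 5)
Your proof is correct and is exactly the unpacking the paper intends: the paper gives no proof, stating the theorem as "clear" from the definition of $P(C)$ together with the preceding reformulation of linear equivalence to a Hermitian self-orthogonal code, which is precisely the characterization you invoke in both directions. Your bookkeeping of supports, zero-extension of the scalars $\mu_i$, and the matching of weight with truncation length is the right (and only) content needed.
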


Thus, as emphasised in \cite{GR2015}, the puncture code is an extremely useful tool in constructing Hermitian self-orthogonal codes. Observe that, the minimum distance of any quantum code, given by an element in the puncture code, will have minimum distance at least the minimum distance of $C^{\perp}$, since any element in the dual of the shortened code will be an element of $C^{\perp}$ if we replace the deleted coordinates with zeros. 

Given a linear code $C$ over ${\mathbb F}_{q^2}$ it is not obvious how one can efficiently compute the puncture code. Let $\mathrm{G}=(g_{i\ell})$ be a generator matrix for $C$, i.e. a $k \times n$ matrix whose row-span is $C$. A straightforward approach would be to construct a ${k+1 \choose 2} \times n$ matrix $\mathrm{T}(\mathrm{G})=(t_{ij,\ell})$ over ${\mathbb F}_{q^2}$, where for $\{i,j\} \subseteq \{1,\ldots,k\}$ we define
\begin{equation} \label{defineT}
t_{ij,\ell}=\left\{ \begin{array}{ll}
 g_{i\ell}g_{j\ell}^q & i<j, \\
 g_{i\ell}^{q+1} & i=j.  
 \end{array} \right.
 \end{equation}
 \begin{lemma} \label{rKerT}
 The puncture code $P(C)$ is the intersection of the right-kernel of $\mathrm{T}(\mathrm{G})$ with ${\mathbb F}_q^n$.
 \end{lemma}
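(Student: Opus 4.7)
The plan is to unwind the definition of $P(C)$ by expressing arbitrary codewords in terms of a generating set, and then to match up the resulting scalar equations with the rows of $\mathrm{T}(\mathrm{G})$. The condition defining $P(C)$ is the vanishing of $\sum_{\ell=1}^n \lambda_\ell u_\ell v_\ell^q$ for every pair $u,v\in C$. Since this expression is $\mathbb{F}_{q^2}$-linear in $u$ and $\mathbb{F}_{q^2}$-semilinear in $v$ (the semilinearity coming from the Frobenius $x\mapsto x^q$), it is enough to check it when $u$ and $v$ each run through the rows $g_i=(g_{i1},\ldots,g_{in})$ of $\mathrm{G}$.

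First I would substitute $u=g_i$ and $v=g_j$ to obtain the family of scalar equations
\[
\sum_{\ell=1}^n \lambda_\ell\, g_{i\ell}\, g_{j\ell}^q = 0, \qquad 1\leqslant i,j\leqslant k.
\]
For $i<j$ this is precisely the equation $\sum_\ell t_{ij,\ell}\lambda_\ell=0$ given by row $(i,j)$ of $\mathrm{T}(\mathrm{G})$, and for $i=j$ it is the equation coming from row $(i,i)$, so these two cases together give exactly the right-kernel condition for $\mathrm{T}(\mathrm{G})\lambda^{\mathsf T}=0$.

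The subtle point is that the matrix $\mathrm{T}(\mathrm{G})$ only records equations indexed by the unordered pairs $\{i,j\}$, whereas the condition on $(u,v)$ produced the equations for all ordered pairs. I would resolve this by invoking the hypothesis that $\lambda\in\mathbb{F}_q^n$: raising the equation for $(i,j)$ to the $q$-th power gives
\[
0 = \Bigl(\sum_\ell \lambda_\ell g_{i\ell} g_{j\ell}^q\Bigr)^q = \sum_\ell \lambda_\ell^q g_{i\ell}^q g_{j\ell}^{q^2} = \sum_\ell \lambda_\ell g_{j\ell}\, g_{i\ell}^q,
\]
using $\lambda_\ell^q=\lambda_\ell$ and $g_{j\ell}^{q^2}=g_{j\ell}$. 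Hence the equation for $(i,j)$ implies that for $(j,i)$, and the conditions on unordered pairs are sufficient.

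The main (minor) obstacle is exactly this last symmetry step; everything else is essentially a linearity reduction. Putting the two directions together yields that $\lambda\in P(C)$ if and only if $\lambda\in\mathbb{F}_q^n$ and $\mathrm{T}(\mathrm{G})\lambda^{\mathsf T}=0$, which is the claim.
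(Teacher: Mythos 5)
Your proof is correct and takes essentially the same route as the paper: expand codewords in terms of the rows of $\mathrm{G}$ and identify the resulting scalar conditions with the rows of $\mathrm{T}(\mathrm{G})$. In fact you make explicit the one point the paper's proof leaves implicit, namely that the equation for the ordered pair $(j,i)$ follows from that for $(i,j)$ by raising to the $q$-th power, which is exactly where the hypothesis $\lambda\in\mathbb{F}_q^n$ is needed.
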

 
 \begin{proof}
For any $u,v \in C$, 
$$
u_{\ell}=\sum_{i=1}^k a_ig_{i\ell} \ \ \mathrm{and}\  \ v_{\ell}=\sum_{j=1}^k b_j g_{j\ell}
$$ 
for some $(a_1,\ldots,a_k) \in {\mathbb F}_q^k$ and $(b_1,\ldots,b_k) \in {\mathbb F}_q^k$.

Since
$$
\sum_{{\ell}=1}^n \lambda_{\ell} u_{\ell}v_{\ell}^q =\sum_{i=1}^k \sum_{j=1}^k a_i b_j ^q \sum_{{\ell}=1}^n \lambda_{\ell} g_{i\ell} g_{j\ell}^q,
$$
we have that $\lambda=(\lambda_1,\ldots,\lambda_n)$ is in the right-kernel of $\mathrm{T}(\mathrm{G})$ if and only if
$$
\sum_{{\ell}=1}^n \lambda_{\ell} u_{\ell}v_{\ell}^q=0,
$$
for all $u,v \in C$.
\end{proof}

Thus, the puncture code $P(C)$ can then be found by extracting the elements in the right-kernel of $\mathrm{T}(\mathrm{G})$ all of whose coordinates are elements of ${\mathbb F}_q$. However, this quickly becomes unfeasible computationally for larger parameters. 

Our first aim, which we will deal with now, is to construct a parity check matrix for $P(C)$, i.e. a matrix {\em over ${\mathbb F}_q$} whose right-kernel is $P(C)$. This allows one to determine, given a linear code $C$ over ${\mathbb F}_{q^2}$, all truncations of $C$ which are linearly equivalent to a Hermitian self-orthogonal code, provided that the dimension of $P(C)$ is not too large.

Let $e \in {\mathbb F}_{q^2} \setminus {\mathbb F}_q$.

Let $\mathrm{M}(\mathrm{G})=(m_{ij,\ell})$ be a $k^2 \times n$ matrix where, for $i,j \in \{1,\ldots,k\}$, we define  
$$
m_{ij,\ell}=\left\{ \begin{array}{ll} 
e g_{i\ell}g_{j\ell}^q+e^q g_{i\ell}^qg_{j\ell} & i<j \\
g_{i\ell}g_{j\ell}^q+ g_{i\ell}^qg_{j\ell} & i>j \\
g_{i\ell}^{q+1} & i=j
\end{array}
\right. .
$$

\begin{theorem} \label{rkerM}
The matrix $\mathrm{M}(\mathrm{G})$ is a parity check matrix for $P(C)$. i.e. $\mathrm{M}(\mathrm{G})$ is defined over ${\mathbb F}_{q}$ and its right-kernel is $P(C)$.
\end{theorem}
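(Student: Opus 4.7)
The plan is to prove the two claims separately, namely that $\mathrm{M}(\mathrm{G})$ has entries in $\mathbb{F}_q$ and that its right-kernel (over $\mathbb{F}_q$) coincides with $P(C)$. For the latter, I will use Lemma~\ref{rKerT} to identify $P(C)$ with $\ker \mathrm{T}(\mathrm{G}) \cap \mathbb{F}_q^n$, and then show that the rows of $\mathrm{M}(\mathrm{G})$ cut out exactly the same set.

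For the first claim, I would observe that all three cases defining $m_{ij,\ell}$ are traces of elements of $\mathbb{F}_{q^2}$ down to $\mathbb{F}_q$. Indeed, writing $\alpha = g_{i\ell}g_{j\ell}^q$ (so that $\alpha^q = g_{i\ell}^q g_{j\ell}$ since $g_{j\ell}\in\mathbb{F}_{q^2}$), the $i<j$ entry is $e\alpha + (e\alpha)^q = \mathrm{Tr}_{\mathbb{F}_{q^2}/\mathbb{F}_q}(e\alpha)$, the $i>j$ entry is $\alpha + \alpha^q = \mathrm{Tr}_{\mathbb{F}_{q^2}/\mathbb{F}_q}(\alpha)$, and the diagonal entry $g_{i\ell}^{q+1}$ is the norm, which also lies in $\mathbb{F}_q$.

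For the inclusion $P(C) \subseteq \ker_{\mathbb{F}_q} \mathrm{M}(\mathrm{G})$, let $\lambda \in P(C)$. By Lemma~\ref{rKerT} we have $\sum_\ell \lambda_\ell g_{i\ell} g_{j\ell}^q = 0$ for all $i\le j$, and the diagonal rows of $\mathrm{M}(\mathrm{G})$ coincide with those of $\mathrm{T}(\mathrm{G})$. Applying the Frobenius $x \mapsto x^q$ to this identity, and using that $\lambda_\ell \in \mathbb{F}_q$ so $\lambda_\ell^q = \lambda_\ell$, yields $\sum_\ell \lambda_\ell g_{i\ell}^q g_{j\ell} = 0$ as well. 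Any $\mathbb{F}_{q^2}$-linear combination of these two sums — in particular the combinations appearing in the $i<j$ and $i>j$ rows of $\mathrm{M}(\mathrm{G})$ — then vanishes, so $\lambda \in \ker \mathrm{M}(\mathrm{G})$.

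The main point is the converse. Given $\lambda \in \mathbb{F}_q^n$ in the right-kernel of $\mathrm{M}(\mathrm{G})$, I want to recover the equations defining $\ker \mathrm{T}(\mathrm{G})$. For a fixed pair $i<j$, set $A = \sum_\ell \lambda_\ell g_{i\ell} g_{j\ell}^q$ and $B = \sum_\ell \lambda_\ell g_{i\ell}^q g_{j\ell}$. The rows $(i,j)$ and $(j,i)$ of $\mathrm{M}(\mathrm{G})$ give the two linear relations
$$
A + B = 0, \qquad eA + e^q B = 0.
$$
The determinant of the coefficient matrix is $e^q - e$, which is nonzero because $e \in \mathbb{F}_{q^2}\setminus \mathbb{F}_q$. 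Hence $A = B = 0$, and combined with the diagonal equations $\sum_\ell \lambda_\ell g_{i\ell}^{q+1}=0$ this shows $\lambda \in \ker \mathrm{T}(\mathrm{G}) \cap \mathbb{F}_q^n = P(C)$. The only slightly subtle step is precisely this $2\times 2$ inversion using both of the off-diagonal row types, which is what forces the doubling of rows in $\mathrm{M}(\mathrm{G})$ compared with $\mathrm{T}(\mathrm{G})$.
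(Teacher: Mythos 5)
Your proposal is correct and follows essentially the same route as the paper: verify the entries lie in $\mathbb{F}_q$ (the paper just asserts this, you justify it via the trace and norm), and for the key converse direction solve the $2\times 2$ system coming from the $(i,j)$ and $(j,i)$ rows using the nonvanishing of $e^q-e$, which is exactly the paper's elimination step. The only cosmetic difference is in the other inclusion, where you invoke Lemma~\ref{rKerT} plus a Frobenius application rather than reading off $\sum_\ell \lambda_\ell g_{i\ell}g_{j\ell}^q=0$ for all ordered pairs $(i,j)$ directly from the definition of $P(C)$; both are fine.
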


\begin{proof}
Observe first that all the entries in the matrix $\mathrm{M}(\mathrm{G})$ are in ${\mathbb F}_{q}$.

Suppose that $\lambda=(\lambda_1,\ldots,\lambda_n)$ is in the right-kernel of $\mathrm{M}(\mathrm{G})$. Hence, for all $i,j \in \{1,\ldots,k\}$  with $i<j$,
$$
\sum_{\ell=1}^n \lambda_{\ell}( e g_{i\ell}g_{j\ell}^q+e^q g_{i\ell}^qg_{j\ell})=0
$$
and 
$$
\sum_{\ell=1}^n \lambda_{\ell}  (g_{j\ell}g_{i\ell}^q+ g_{j\ell}^qg_{i\ell})=0.
$$
Multiplying the latter equation by $e^q$ and subtracting the former implies
$$
(e^q-e)\sum_{\ell=1}^n \lambda_{\ell} g_{i\ell}  g_{j\ell}^q=0.
$$
Since $\lambda=(\lambda_1,\ldots,\lambda_n)$ is in the right-kernel of $\mathrm{M}(\mathrm{G})$ we also have that
$$
\sum_{\ell=1}^n \lambda_{\ell}  g_{i\ell}^{q+1}=0.
$$
Hence, $\lambda$ is in the right-kernel of $\mathrm{T}(\mathrm{G})$. 

Since it is also in ${\mathbb F}_q^n$, by Lemma~\ref{rKerT}, $\lambda \in P(C)$.

Suppose that $\lambda=(\lambda_1,\ldots,\lambda_n) \in P(C)$. Then, for all $i,j \in \{1,\ldots,k\}$,
$$
\sum_{\ell=1}^n \lambda_{\ell} g_{i\ell}  g_{j\ell}^q=0.
$$
This implies that $\lambda$ is in the right-kernel of $\mathrm{M}(\mathrm{G})$.
\end{proof}

 \begin{example}
Theorem~\ref{rkerM} can allow us to efficiently calculate the puncture code of a linear code. Then for each codeword of weight $r$ in the puncture code, by Theorem~\ref{puncturecodethm}, we can construct a quantum error correcting code of length $r$. For example, let $C$ be the linear $[43,7]_4$ code, which is dual to the cyclic linear $[43,36,5]_4$ code, constructed from the divisor of $x^{43}-1$,
$$
x^7 + e x^5 + x^4 + x^3 + e^2x^2 + 1,
$$
where $e$ is a primitive element of ${\mathbb F}_4$.

By Theorem~\ref{rkerM}, we can calculate the puncture code from the $49 \times 43$ matrix $\mathrm{M}$ over ${\mathbb F}_2$, which turns out to have rank $29$. The puncture code $P(C)$ has weights $14+2j$ for all $j \in \{0,1,2,3,4,5,6,7,8\}$.



The truncations to codes of length $14$ give a $[14,7,6]_4$ code which is equal to its Hermitian dual. By Theorem~\ref{sigmaortog}, this implies the existence of a $[\![14,0,6]\!]_2$ quantum code.

The truncations to codes of length $18+2j$ give a $[18+2j,7]_4$ code with dual minimum distance $5$, which by Theorem~\ref{sigmaortog} implies the existence of a $[\![18+2j,4+2j,5]\!]_2$ quantum code, for all $j \in \{0,1,2,3,4,5,6\}$.

These codes equal the best known qubit error-correcting codes, according to Grassl \cite{codetables}.
 
\end{example}

\begin{example} 
Consider the dual $C$ to the cyclic linear $[51,42,6]_4$ code, constructed from the divisor of $x^{51}-1$,
$$
x^9 + e^2x^8 + ex^6 + x^5 + e^2x^4 + e^2x^2 + e^2x + 1.
$$
The dimension of the puncture code $P(C)$ is $10$. The puncture code $P(C)$ has codewords of weight $18+2j$, for all $j \in \{0,2,3,4,6,7,8\}$, which implies that it truncates to codes equivalent to Hermitian self-orthogonal codes of length $18+2j$. One can check these are $[18+2j,9]_4$ codes with dual minimum distance $6$. By Theorem~\ref{sigmaortog}, this implies the existence of a $[\![18+2j,2j,6]\!]_2$ quantum code, for all $j \in \{0,2,3,4,6,7,8\}$. 
\end{example}
 
\begin{example} 
Consider $C$ the $[15,5]_9$ code with generator matrix 
$$
\mathrm{G}=
\left(
\begin{array}{ccccccccccccccc}
        1& 1& 1& 1& 1& 0& 0& 0& 0& 
      0& 1& 0& 0& 0& 0 \\ 
   0& 0& 0& 0& 0& 1& 1& 1& 1& 
      1& 0& 1& 0& 0& 0 \\ 
   e^7& e^6& e^5& e^4& 1& e& e^3& e^5& 
      e^4& 1& 0& 0& 1& 0& 0 \\ 
   e^3& e& e^4& e^5& 1& e^6& e^7& e^4& 
      e^5& 1& 0& 0& 0& 1& 0 \\ 
   e^6& e^7& e^5& e^2& e^4& e^2& e^6& 
      e^7& e^3& 1& 0& 0& 0& 0& 1 
      \end{array}
      \right)
$$
The dual code $C^{\perp}$ is a linear $[15,10,5]_9$ code. The dimension of the puncture code $P(C)$ is $2$ and has codewords of weight $9,12$ and $15$. This implies that it truncates to codes equivalent to Hermitian self-orthogonal codes of length $9$, 12 and 15 and one can check that these codes are a $[9,4]_9$, a $[12,5]_9$ and a $[15,5]_9$ codes all with dual minimum distance $5$. By Theorem~\ref{sigmaortog}, this implies the existence of a $[\![9,1,5]\!]_3$, a $[\![12,2,5]\!]_3$ and a $[\![15,5,5]\!]_3$ code. The former of these attains the quantum Singleton bound, proved by Rains in \cite{Rains1999}, which states that
$$
k\leqslant n-2(d-1).
$$
It was proven in \cite{BV2021} that a $[9,4,6]_9$ MDS code does not come from a truncation of a generalised Reed-Solomon code. The only $[9,4,6]_9$ code which is not the truncation of a generalised Reed-Solomon code is the projection of Glynn's $[10,5,6]_9$ MDS code, see \cite{Glynn1986}.
\end{example}

 
\section{The geometry of Hermitian self-orthogonal codes}

Let $\mathrm{PG}(k-1,q)$ denote the $(k-1)$-dimensional projective space over ${\mathbb F}_q$. 

A Hermitian form is given by
$$
H(X)=\sum_{1 \leqslant i<j \leqslant k} (h_{ij}X_iX_j^q+ h_{ij}^qX_i^qX_j)+\sum_{i=1}^k h_{ii}^{q+1} X_i^{q+1}.
$$
for some $h_{ij} \in {\mathbb F}_{q^2}$.

The set of Hermitian forms 
is a $k^2$-dimensional vector space over ${\mathbb F}_q$.


Let $\mathrm{G}=(g_{i\ell})$ be a $k \times n$ generator matrix for a linear code $C$ whose dual minimum distance is at least three. Let $\mathcal{X}$ be the set of columns of $\mathrm{G}$ considered as points of $\mathrm{PG}(k-1,q)$. Observe that the condition that the dual code of $C$ has minimum distance at least three ensures that $\mathcal X$ is a set (and not a multi-set). Such a code is often called a {\em projective code}. Observe that the set $\mathcal{X}$ is the same for all codes linearly equivalent to $C$. Let $\mathrm{HF}(\mathcal X)$ be the subspace of Hermitian forms that are zero on $\mathcal X$.

\begin{lemma} \label{LkerM}
The dimension of the left kernel of the matrix $\mathrm{M}(\mathrm{G})$ is equal to $\dim\mathrm{HF}(\mathcal X)$.
\end{lemma}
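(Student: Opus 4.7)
The plan is to establish a natural $\mathbb{F}_q$-linear isomorphism between the space of Hermitian forms on $\mathbb{F}_{q^2}^k$ (as a $k^2$-dimensional $\mathbb{F}_q$-vector space) and the space $\mathbb{F}_q^{k^2}$ of row-vectors indexing the rows of $\mathrm{M}(\mathrm{G})$, under which the condition that a Hermitian form $H$ vanishes on every column $g_\ell$ of $\mathrm{G}$ translates precisely to the condition that the corresponding row-vector lies in the left kernel of $\mathrm{M}(\mathrm{G})$.

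First I would parameterise Hermitian forms in a way that matches the rows of $\mathrm{M}(\mathrm{G})$. Using $e \in \mathbb{F}_{q^2} \setminus \mathbb{F}_q$, write each off-diagonal coefficient as $h_{ij}=\alpha_{ij}+e\beta_{ij}$ with $\alpha_{ij},\beta_{ij}\in\mathbb{F}_q$, and set $\gamma_i:=h_{ii}^{q+1}\in\mathbb{F}_q$ (noting that the norm map is surjective onto $\mathbb{F}_q$, so the $\gamma_i$ range freely over $\mathbb{F}_q$). A short computation gives
$$
h_{ij}X_iX_j^q+h_{ij}^qX_i^qX_j=\alpha_{ij}\bigl(X_iX_j^q+X_i^qX_j\bigr)+\beta_{ij}\bigl(eX_iX_j^q+e^qX_i^qX_j\bigr),
$$
and hence for each column $g_\ell$ of $\mathrm{G}$,
$$
H(g_\ell)=\sum_{i<j}\Bigl[\alpha_{ij}\bigl(g_{i\ell}g_{j\ell}^q+g_{i\ell}^qg_{j\ell}\bigr)+\beta_{ij}\bigl(eg_{i\ell}g_{j\ell}^q+e^qg_{i\ell}^qg_{j\ell}\bigr)\Bigr]+\sum_{i=1}^k\gamma_i g_{i\ell}^{q+1}.
$$
Comparing this with the definition of $m_{ij,\ell}$, the expression on the right is exactly $\sum_{i,j} c_{ij}m_{ij,\ell}$, where $c_{ij}=\beta_{ij}$ for $i<j$, $c_{ij}=\alpha_{ji}$ for $i>j$, and $c_{ii}=\gamma_i$.

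The assignment $H \mapsto (c_{ij})$ is then manifestly an $\mathbb{F}_q$-linear isomorphism from the space of Hermitian forms to $\mathbb{F}_q^{k^2}$: both sides have dimension $2\binom{k}{2}+k=k^2$, and inverting the formulas $h_{ij}=\alpha_{ij}+e\beta_{ij}$ and $\gamma_i=h_{ii}^{q+1}$ gives the inverse map. Under this isomorphism, $H \in \mathrm{HF}(\mathcal X)$ if and only if $H(g_\ell)=0$ for every $\ell$, which by the identity above is equivalent to $(c_{ij})$ being a left-kernel vector of $\mathrm{M}(\mathrm{G})$. The equality of dimensions follows immediately.

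There is no real obstacle here; the only point requiring a modicum of care is bookkeeping the two distinct roles of the index pair $(i,j)$ in $\mathrm{M}(\mathrm{G})$ — namely, that the rows with $i<j$ pick up the imaginary part $\beta_{ij}$ while those with $i>j$ pick up the real part $\alpha_{ji}$ — and checking that $\{1,e\}$ being an $\mathbb{F}_q$-basis of $\mathbb{F}_{q^2}$ makes the correspondence bijective, which is immediate from $e \notin \mathbb{F}_q$.
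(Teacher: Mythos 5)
Your proof is correct and follows essentially the same route as the paper: both decompose the off-diagonal coefficients $h_{ij}$ over the $\mathbb{F}_q$-basis $\{1,e\}$ of $\mathbb{F}_{q^2}$ and match the resulting coordinates to the rows of $\mathrm{M}(\mathrm{G})$. Packaging the correspondence as a single explicit linear isomorphism $H\mapsto (c_{ij})$ is a slightly cleaner way of handling what the paper does in two directions with a separate linear-independence check, but the underlying computation is identical.
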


\begin{proof}
Let $x \in \mathcal X$ and consider a vector $v$ in the left kernel of $\mathrm{M}(\mathrm{G})$. 

Observe that the coordinates of $v$ are indexed by $i,j \in \{1,\ldots,k\}$. 

Since $x$ is a column of $\mathrm{G}$,
$$
\sum_{i,j=0}^k v_{ij} (e x_{i}x_{j}^q+e^q x_{i}^q x_{j}) +
v_{ji}( x_{i}x_{j}^q+ x_{i}^qx_{j} )+
\sum_{i=1}^k v_{ii}x_{i}^{q+1}=0.
$$
Thus, defining 
$$
h_{ij}=ev_{ij}+v_{ji} \ \ \mathrm{and} \ \ h_{ii}^{q+1}=v_{ii},
$$
we have that 
$$
H(x)=0.
$$
Letting $v$ run over a basis for the left kernel of $\mathrm{M}(\mathrm{G})$, we obtain a set of linearly independent Hermitian forms. Indeed, let $B$ be a basis for the left kernel of $\mathrm{M}(\mathrm{G})$. Suppose there are $\lambda_v \in {\mathbb F}_q$, for $v \in B$, not all zero, such that, for all $i,j \in \{1,\ldots,k\}$,
$$
\sum_{v \in B} \lambda_v(ev_{ij}+v_{ji})=0, \ \  \sum_{v \in B}  \lambda_v v_{ii}=0. 
$$
Since $e \in {\mathbb F}_{q^2} \setminus {\mathbb F}_q$, this implies
$$
\sum_{v \in B} \lambda_v v_{ij}=0,
$$
for all $i,j \in \{1,\ldots,k\}$, contradicting the fact that $B$ is a basis.
 
Vice-versa, if $H(x)=0$ for some Hermitian form $H$, then we obtain $v_{ij}$ by solving   
$$
h_{ij}=ev_{ij}+v_{ji} \ \ \mathrm{and} \ \ h_{ij}^q=e^qv_{ij}+v_{ji},
$$
where $v_{ij},v_{ji} \in {\mathbb F}_q$, and $v_{ii}=h_{ii}^{q+1}$.
Letting $H$ run over a basis for $\mathrm{HF}(\mathcal X)$, we obtain a set of linearly independent vectors in the left kernel of the matrix $\mathrm{M}(\mathrm{G})$.
\end{proof}

The previous lemma allows us to calculate the dimension of the puncture code in terms of the dimension of the space of Hermitian forms which are zero on $\mathcal X$. In the following $\mathcal X$ is obtained, as before, as the set of columns of a generator matrix for $C$, viewed as points of $\mathrm{PG}(k-1,q)$. Note, that the statement that $\mathcal X$ imposes $r$ conditions on the space of Hermitian forms is to say that the co-dimension of $\mathrm{HF}(\mathcal X)$ is $r$.

\begin{theorem} \label{dimpc}
The set $\mathcal X$ imposes $n-\dim P(C)$ conditions on the space of Hermitian forms and
$$
\dim P(C)=n-k^2+\dim \mathrm{HF}(\mathcal X).
$$ 
\end{theorem}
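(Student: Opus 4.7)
The plan is to read both assertions off a rank-nullity count applied to the matrix $\mathrm{M}(\mathrm{G})$, using the two identifications already in hand. By Theorem \ref{rkerM}, $\mathrm{M}(\mathrm{G})$ is defined over $\mathbb{F}_q$ and its right kernel is exactly $P(C)$; by Lemma \ref{LkerM}, its left kernel has $\mathbb{F}_q$-dimension $\dim \mathrm{HF}(\mathcal X)$. Since $\mathrm{M}(\mathrm{G})$ has $k^2$ rows and $n$ columns, letting $r$ denote its $\mathbb{F}_q$-rank, rank-nullity yields simultaneously $r = n - \dim P(C)$ from the column side and $r = k^2 - \dim \mathrm{HF}(\mathcal X)$ from the row side. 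Equating these two expressions for $r$ gives the claimed formula $\dim P(C) = n - k^2 + \dim \mathrm{HF}(\mathcal X)$.

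For the first assertion, I would invoke the fact, recorded at the start of Section 3, that the space of all Hermitian forms has $\mathbb{F}_q$-dimension $k^2$. The number of independent linear conditions imposed by $\mathcal X$ on this space is, by definition, the codimension of $\mathrm{HF}(\mathcal X)$ inside it, namely $k^2 - \dim \mathrm{HF}(\mathcal X)$. The row-side rank-nullity count above already identifies this codimension with $n - \dim P(C)$, which is precisely the number of conditions claimed.

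The only point requiring any care is to keep track of the fact that every dimension in sight is taken over $\mathbb{F}_q$. This is automatic on the puncture-code side, since $P(C) \subseteq \mathbb{F}_q^n$ by definition, and on the Hermitian-form side it holds because $\mathrm{M}(\mathrm{G})$ was designed precisely so that its entries lie in $\mathbb{F}_q$, as verified at the beginning of the proof of Theorem \ref{rkerM}. There is therefore no substantive obstacle: the theorem is a one-line corollary of Theorem \ref{rkerM} and Lemma \ref{LkerM}, and the proof I would write is just the explicit double application of rank-nullity sketched above.
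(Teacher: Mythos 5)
Your proposal is correct and follows exactly the paper's argument: rank--nullity on $\mathrm{M}(\mathrm{G})$, identifying the right kernel with $P(C)$ via Theorem~\ref{rkerM} and the left kernel dimension with $\dim\mathrm{HF}(\mathcal X)$ via Lemma~\ref{LkerM}, then reading the first assertion off the codimension of $\mathrm{HF}(\mathcal X)$. No differences worth noting.
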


\begin{proof}
By Lemma~\ref{LkerM},
$$
\dim \mathrm{HF}(\mathcal X)=\dim \mathrm{left} \, \mathrm{kernel}\  \mathrm{M}(\mathrm{G})=k^2-\mathrm{rank}\; \mathrm{M}(\mathrm{G}).
$$
By Theorem~\ref{rkerM},
$$
n-\mathrm{rank}\; \mathrm{M}(\mathrm{G}))=\dim P(C),
$$
which proves the second statement. For the first statement, observe that $\dim \mathrm{HF}(\mathcal X)= k^2-r$, where $r$ is the number of conditions imposed by $\mathcal X$ on the space of Hermitian forms.
\end{proof}

Note that in the following statements the truncation may be the code itself.

\begin{theorem} \label{nconds}
The set of points $\mathcal X$ imposes $|\mathcal X|$ conditions on the space of Hermitian forms if and only if no truncation of $C$ is equivalent to a Hermitian self-orthogonal code.
\end{theorem}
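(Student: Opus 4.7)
The plan is to show that both sides of the biconditional are equivalent to the vanishing of $P(C)$, using Theorem~\ref{dimpc} and Theorem~\ref{puncturecodethm} as the two workhorses.

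First I would translate the left-hand side. By definition, $\mathcal X$ imposes $|\mathcal X| = n$ conditions on the $k^2$-dimensional space of Hermitian forms precisely when the co-dimension of $\mathrm{HF}(\mathcal X)$ is $n$, i.e.\ $\dim \mathrm{HF}(\mathcal X) = k^2 - n$. Substituting into the identity $\dim P(C) = n - k^2 + \dim \mathrm{HF}(\mathcal X)$ supplied by Theorem~\ref{dimpc} yields $\dim P(C) = 0$, so $P(C) = \{0\}$. The step reverses: if $\dim P(C) = 0$, the same formula forces $\dim \mathrm{HF}(\mathcal X) = k^2 - n$, which is the assertion that $\mathcal X$ imposes $n$ conditions.

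Next I would translate the right-hand side. By Theorem~\ref{puncturecodethm}, a truncation of $C$ to length $r$ is linearly equivalent to a Hermitian self-orthogonal code exactly when $P(C)$ contains a codeword of weight $r$. Therefore no truncation (including the code itself) has this property if and only if $P(C)$ contains no non-zero vector, i.e.\ $P(C) = \{0\}$. Combining the two translations gives the claimed equivalence in one line.

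The only point requiring a moment of care is the degenerate range $n > k^2$, where the phrase ``$\mathcal X$ imposes $n$ conditions'' would require more than $k^2$ independent linear conditions on a $k^2$-dimensional space and so is automatically false. Here $\mathrm{M}(\mathrm{G})$ is a $k^2 \times n$ matrix over ${\mathbb F}_q$, hence by Theorem~\ref{rkerM} its right-kernel $P(C)$ has dimension at least $n - k^2 > 0$, so some non-zero $\lambda \in P(C)$ exists, which by Theorem~\ref{puncturecodethm} yields a truncation equivalent to a Hermitian self-orthogonal code. Both sides of the biconditional fail in this range, so the statement is vacuously true, and beyond this bookkeeping remark there is no real obstacle: the proof is a direct corollary of Theorems~\ref{puncturecodethm} and~\ref{dimpc}.
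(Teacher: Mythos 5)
Your proof is correct and is essentially identical to the paper's: both translate the condition ``$\mathcal X$ imposes $n$ conditions'' into $\dim P(C)=0$ via Theorem~\ref{dimpc}, and then apply Theorem~\ref{puncturecodethm} to identify $P(C)=\{0\}$ with the nonexistence of a truncation equivalent to a Hermitian self-orthogonal code. The extra remark about the range $n>k^2$ is harmless bookkeeping that the paper omits.
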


\begin{proof}
Theorem~\ref{dimpc} implies that the set of points $\mathcal X$ imposes $n$ conditions on the space of Hermitian forms if and only if $\dim P(C)=0$ which, by Theorem~\ref{puncturecodethm}, is if and only if no truncation of $C$ is equivalent to a Hermitian self-orthogonal code.
\end{proof}

Thus, from Theorem~\ref{nconds}, we deduce that to find codes contained in their Hermitian dual it is necessary and sufficient to find a set of points $\mathcal X$ which does not impose $|\mathcal X|$ conditions on the space of Hermitian forms.

\begin{theorem}  \label{notnconds}
The set of points $\mathcal X$ imposes less than $|\mathcal X|$ conditions on the space of Hermitian forms if and only if some truncation of $C$ is linearly equivalent to a Hermitian self-orthogonal code.
\end{theorem}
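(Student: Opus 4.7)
The plan is to chain together Theorem~\ref{dimpc} and Theorem~\ref{puncturecodethm}, in exactly the same spirit as the proof of Theorem~\ref{nconds}; the statement is really the contrapositive/complement of that theorem.

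First I would record that, since $C$ has dual minimum distance at least three, $\mathcal X$ is a set of size exactly $n$, so the hypothesis ``imposes fewer than $|\mathcal X|$ conditions'' is the same as ``imposes fewer than $n$ conditions''. By Theorem~\ref{dimpc}, the number of conditions imposed by $\mathcal X$ on the space of Hermitian forms is $n-\dim P(C)$. Hence $\mathcal X$ imposes fewer than $n$ conditions on the space of Hermitian forms if and only if $\dim P(C)\geqslant 1$, i.e.\ if and only if $P(C)$ contains a nonzero vector.

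For the forward direction, given any nonzero $\lambda=(\lambda_1,\ldots,\lambda_n)\in P(C)$, its weight $r$ satisfies $1\leqslant r\leqslant n$, so by Theorem~\ref{puncturecodethm} there is a truncation of $C$ of length $r$ which is linearly equivalent to a Hermitian self-orthogonal code. (The parenthetical remark preceding the theorem allows $r=n$, i.e.\ the truncation may be $C$ itself.) Conversely, if some truncation of $C$ is linearly equivalent to a Hermitian self-orthogonal code, then Theorem~\ref{puncturecodethm} produces an element of $P(C)$ of positive weight, in particular a nonzero element, so $\dim P(C)\geqslant 1$.

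Combining the two equivalences yields the biconditional of the theorem. There is no real obstacle here: the only point to be careful about is that a nonzero element of $P(C)$ automatically has positive weight, so one genuinely obtains a truncation (of positive length), and that the projectivity of $C$ ensures $|\mathcal X|=n$ so the two conditions ``fewer than $|\mathcal X|$'' and ``fewer than $n$'' coincide.
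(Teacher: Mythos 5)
Your proof is correct and matches the paper's intent: the paper gives no explicit proof of this theorem, treating it as the immediate complement of Theorem~\ref{nconds} via Theorem~\ref{dimpc} (which shows $\mathcal X$ imposes $n-\dim P(C)\leqslant n$ conditions) and Theorem~\ref{puncturecodethm}, which is exactly the chain of equivalences you spell out.
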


Theorem~\ref{notnconds} has some immediate consequences. 
\begin{theorem}
A linear $[n,k]_{q^2}$ code for which $n>k^2$ has a truncation which is linearly equivalent to Hermitian self-orthogonal code.
\end{theorem}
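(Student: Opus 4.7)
The plan is to deduce the theorem immediately from Theorem~\ref{dimpc} (or alternatively from Theorem~\ref{rkerM} together with Theorem~\ref{puncturecodethm}). The key observation is that the matrix $\mathrm{M}(\mathrm{G})$ has exactly $k^2$ rows, so its rank is at most $k^2$. Since the right kernel of $\mathrm{M}(\mathrm{G})$ equals $P(C)$ by Theorem~\ref{rkerM}, we immediately get $\dim P(C) \geqslant n - k^2$. Equivalently, this says that $\dim\mathrm{HF}(\mathcal X) \geqslant 0$ in the formula $\dim P(C)=n-k^2+\dim\mathrm{HF}(\mathcal X)$ supplied by Theorem~\ref{dimpc}.

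The first step is therefore to invoke Theorem~\ref{dimpc} to conclude that under the hypothesis $n > k^2$ the puncture code $P(C)$ is non-trivial. Pick any non-zero $\lambda \in P(C)$ and let $r$ denote its (positive) Hamming weight. The second step is to appeal to Theorem~\ref{puncturecodethm}, which guarantees that the existence of a weight-$r$ element in $P(C)$ is exactly the condition needed for $C$ to have a truncation of length $r$ which is linearly equivalent to a Hermitian self-orthogonal code.

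There is essentially no obstacle: the only thing one might want to double-check is that the scalars $\lambda_i \in {\mathbb F}_q^*$ produced by the puncture code really do correspond to admissible scalars $\theta_i \in {\mathbb F}_{q^2}^*$ with $\theta_i^{q+1}=\lambda_i$ in the linear equivalence, but this is immediate from the surjectivity of the norm map $N\colon {\mathbb F}_{q^2}^{*} \to {\mathbb F}_q^{*}$, and has already been absorbed into the passage from $\theta_i^{q+1}$ to $\lambda_i$ at the beginning of Section~\ref{puncturecodesection}. Thus the theorem follows in two lines and does not require the more delicate geometric analysis needed to understand the case $n \leqslant k^2$.
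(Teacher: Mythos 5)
Your proof is correct and is essentially the paper's argument: the paper observes that since $n$ exceeds the dimension $k^2$ of the space of Hermitian forms, $\mathcal X$ cannot impose $n$ conditions and then invokes Theorem~\ref{notnconds}, which is exactly your rank bound $\dim P(C)\geqslant n-k^2>0$ followed by Theorem~\ref{puncturecodethm}, unwound one level. The only cosmetic difference is that by working directly with $\mathrm{M}(\mathrm{G})$ and Theorem~\ref{rkerM} you sidestep the projective-code (dual distance at least three) hypothesis built into the geometric formulation.
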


\begin{proof}
Since $n$ is larger than the dimension of the space of Hermitian forms, $\mathcal X$ cannot impose $n$ conditions on the space of Hermitian forms. Hence, Theorem~\ref{notnconds} implies the statement.
\end{proof}

\begin{example}
Let $e$ be a primitive element of ${\mathbb F}_9$, where $e^2=e+1$. Let $D$ be the cyclic linear $[73,66,6]_9$ code, constructed from the divisor of $x^{73}-1$,
$$
x^7 + e x^6 + e^6x^5 + e^3x^4 + e^7x^3 + e^2x^2 +e^5x+ 2.
$$
Let $C$ be the $[60,7]$ code obtained from $D^{\perp}$ be deleting coordinates 61 to 73. The dimension of the puncture code $P(C)$ is $11$. The puncture code $P(C)$ has codewords of weight $\{26,27,\ldots,55\}$ which implies the existence of a $[\![n,n-14, 6]\!]_3$ quantum codes, for all $n \in \{26,27,\ldots,55\}$.
\end{example}

The previous theorem and following theorem are the main results of this paper.
 
\begin{theorem} \label{noshortherm}
A linear $[n,k]_{q^2}$ code $C$ of length $n$ over ${\mathbb F}_{q^2}$ which has no truncations which are linearly equivalent to a Hermitian self-orthogonal code can be extended to $C'$, a $[n+1,k]_{q^2}$ code which does have a truncation to a code which is linearly equivalent to a Hermitian self-orthogonal code, if and only if $\mathcal X$ imposes $n$ conditions on the space of Hermitian forms and the set of common zeros of $\mathrm{HF}(\mathcal X)$ is larger than $|\mathcal X|$. 
\end{theorem}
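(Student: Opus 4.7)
The plan is to reduce both directions of the equivalence to a single identity $\mathrm{HF}(\mathcal X')=\mathrm{HF}(\mathcal X)$, squeezed between the containment $\mathrm{HF}(\mathcal X')\subseteq\mathrm{HF}(\mathcal X)$ coming from $\mathcal X\subseteq\mathcal X'$ and a codimension bound supplied by Theorem~\ref{notnconds}. By Theorem~\ref{nconds}, the standing hypothesis that $C$ admits no Hermitian self-orthogonal truncation is equivalent to $\mathcal X$ imposing exactly $n$ conditions on the space of Hermitian forms, i.e.\ $\dim\mathrm{HF}(\mathcal X)=k^2-n$, and this fact is used throughout.

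For the forward direction, let $C'$ be obtained from $C$ by appending a column $y$ and suppose some truncation of $C'$ is linearly equivalent to a Hermitian self-orthogonal code; working in the projective setting of this section, the set $\mathcal X':=\mathcal X\cup\{y\}$ has size $n+1$. Theorem~\ref{notnconds} applied to $C'$ yields $\dim\mathrm{HF}(\mathcal X')\geq k^2-n$, while the inclusion $\mathcal X\subseteq\mathcal X'$ forces $\dim\mathrm{HF}(\mathcal X')\leq k^2-n$. Equality gives $\mathrm{HF}(\mathcal X')=\mathrm{HF}(\mathcal X)$, so every form in $\mathrm{HF}(\mathcal X)$ vanishes on $y$; since $y\notin\mathcal X$, the common zero set of $\mathrm{HF}(\mathcal X)$ is strictly larger than $|\mathcal X|$. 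Conversely, given a common zero $y$ of $\mathrm{HF}(\mathcal X)$ with $y\notin\mathcal X$, let $C'$ be the $[n+1,k]_{q^2}$ code obtained by appending a representative of $y$ as a new column to a generator matrix of $C$; then $\mathrm{HF}(\mathcal X\cup\{y\})=\mathrm{HF}(\mathcal X)$ directly from the definition of ``common zero'', so $\mathcal X'$ imposes $n$ conditions on the space of Hermitian forms, strictly fewer than $|\mathcal X'|=n+1$, and Theorem~\ref{notnconds} delivers the required truncation.

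The argument is essentially a dimension squeeze and presents no serious obstacle; the only point meriting care is confirming $|\mathcal X'|=n+1$ in the forward direction, which is ensured by the projective setting of this section, since extending by a column in the same projective class as an existing column would take $C'$ outside the geometric framework of Lemma~\ref{LkerM} and is not what the statement contemplates.
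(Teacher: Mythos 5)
Your proof is correct and follows essentially the same route as the paper's: both directions come down to the observation that $\mathrm{HF}(\mathcal X')=\mathrm{HF}(\mathcal X)$, obtained by squeezing $\dim\mathrm{HF}(\mathcal X')$ between the containment $\mathrm{HF}(\mathcal X')\subseteq\mathrm{HF}(\mathcal X)$ and the condition-count from Theorem~\ref{dimpc} (you invoke Theorem~\ref{notnconds}, which is just that count combined with Theorem~\ref{puncturecodethm}). Your explicit remark that the appended column must give a new projective point is a worthwhile clarification the paper leaves implicit, but it does not change the argument.
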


\begin{proof}
($\Rightarrow$) Let $\mathcal X'$ be the set of columns of a generator matrix for $C'$ obtained by extending the matrix $\mathrm{G}$. By Theorem~\ref{dimpc}, both $\mathcal X$ and $\mathcal X'$ impose $n$ conditions on the space of Hermitian forms. Hence,  
$$
\mathrm{HF}(\mathcal X)=\mathrm{HF}(\mathcal X')
$$
 which implies that the set of common zeros of $\mathrm{HF}(\mathcal X)$ contains $\mathcal X'$.
 
($\Leftarrow$) Let $\mathcal X'=\mathcal X \cup \{x\}$ be a subset of the set of common zeros of $\mathrm{HF}(\mathcal X)$. Let $C'$ be the code with generator matrix whose columns are the elements of $\mathcal X'$. Then $\mathcal X'$ imposes $n$ conditions on the space of Hermitian forms, so Theorem~\ref{dimpc} implies that $\dim P(C')=1$. Thus, $C'$ extends $C$ to  a $[n+1,k]_{q^2}$ code which has a truncation to a code which is linearly equivalent to a Hermitian self-orthogonal code.
 \end{proof}

Theorem~\ref{noshortherm} indicates that to extend a linear code $C$ to a Hermitian self-orthogonal code, we should calculate the set of common zeros of the Hermitian forms which are zero on the columns of a generator matrix for $C$. 

\begin{example} The $[13,7]_4$ code generated by the matrix
$$
\mathrm{G}=
\left(
\begin{array}{ccccccccccccc}
  1& 0& 0& 0& 0& 0& 0& 1& e& 
      0& e^2& e& e \\ 
   0& 1& 0& 0& 0& 0& 0& 0& e& 
      e& e& 0& e^2 \\ 
   0& 0& 1& 0& 0& 0& 0& 1& 1& 
      e& e^2& 1& 0 \\ 
   0& 0& 0& 1& 0& 0& 0& e& 1& 
      0& e& 0& e^2 \\ 
   0& 0& 0& 0& 1& 0& 0& 0& e^2& 
      e^2& e& e& 0 \\ 
   0& 0& 0& 0& 0& 1& 0& e^2& 
      e^2& e& 1& e^2& e \\ 
   0& 0& 0& 0& 0& 0& 1& 1& 1& 
      1& e^2& e& e^2 
\end{array}
\right)
      $$
has dual minimum distance $6$. As before, let $\mathcal X$ be the $13$ points which are the columns of the matrix $\mathrm{G}$. The dimension of $\mathrm{HF}(\mathcal X)$ is 36, so $\mathcal X$ imposes $13$ conditions on the space of Hermitian forms. Theorem~\ref{dimpc} implies that $\dim P(C)=0$, so $C$ has no truncations which are linearly equivalent to Hermitian self-orthogonal codes. However, there are $14$ points which are common zeros of the zeros of $\mathrm{HF}(\mathcal X)$, the points of $\mathcal X$ and the point
$$
(0,e,0,1,e,1,1).
$$
Thus, Theorem~\ref{noshortherm} implies that the $[14,7]_4$ code, with generator matrix
$$
\left(
\begin{array}{cccccccccccccc}
  1& 0& 0& 0& 0& 0& 0& 1& e& 
      0& e^2& e& e & 0 \\ 
   0& 1& 0& 0& 0& 0& 0& 0& e& 
      e& e& 0& e^2 & e\\ 
   0& 0& 1& 0& 0& 0& 0& 1& 1& 
      e& e^2& 1& 0 & 0\\ 
   0& 0& 0& 1& 0& 0& 0& e& 1& 
      0& e& 0& e^2 & 1\\ 
   0& 0& 0& 0& 1& 0& 0& 0& e^2& 
      e^2& e& e& 0 & e\\ 
   0& 0& 0& 0& 0& 1& 0& e^2& 
      e^2& e& 1& e^2& e & 1\\ 
   0& 0& 0& 0& 0& 0& 1& 1& 1& 
      1& e^2& e& e^2 & 1\\
\end{array}
\right)
      $$
      has a truncation which is Hermitian self-orthogonal. Indeed, one can check that the code itself is Hermitian self-orthogonal.  Thus, from this code we can construct, by Theorem~\ref{sigmaortog}, a $[\![14,0,6]\!]_2$ code. 
\end{example}

\section{Conclusions and further work}

In conclusion, we give a summary of the main results.

Suppose that $C^{\perp}$ is a $[n,n-k,d]_{q^2}$, where $d \geqslant 3$.

If $n >k^2$ then we have shown that there are truncations of $C$ which are linearly equivalent to Hermitian self-orthogonal codes. 

If $n\leqslant k^2$ and $\dim P(C)>0$ then there are truncations of $C$ which are linearly equivalent to Hermitian self-orthogonal codes. 

If $n\leqslant k^2$ and $\dim P(C)=0$ and there are points which are not in $\mathcal X$ but are zeros of the forms in $\mathrm{HF}(\mathcal X)$ then we can extend $C$ to a $[n+1,k]_{q^2}$ which does have truncations which are linearly equivalent to Hermitian self-orthogonal codes. 

Finally, if $n\leqslant k^2$ and $\dim P(C)=0$ and there are no points which are zeros of the forms in $\mathrm{HF}(\mathcal X)$ but which are not in $\mathcal X$ then $C$ has no extension to a $[n+1,k]_{q^2}$ which has truncations that are linearly equivalent to Hermitian self-orthogonal codes. In this case we can extend $C$ trying to maintain the dual minimum distance. This will reduce the dimension of $\mathrm{HF}(\mathcal X)$ by one, which then creates the possibility that there are points which are not in $\mathcal X$ but are zeros of the forms in $\mathrm{HF}(\mathcal X)$. Indeed we can try and find extensions of $C$ so that this is the case.

In all of the above we can can construct a $[\![r,r-2k',d]\!]_q$ code from a truncation of length $r$, for some $k' \leqslant k$.

It should be able to extend these methods to make use of the following recent result of Galindo and Hernando \cite[Theorem 1.2]{GH2021}, which is an extension of Theorem~\ref{sigmaortog}.

There is also the possibility to extend these methods to self-othogonal codes, i.e. $C \leqslant C^{\perp}$. This will work well in the case that the characteristic is even, since $\lambda^{q+1}$ is replaced by $\lambda^2$ and all elements in a field of even characterstic have a square root. The role of the Hermitian form is then replaced by a quadratic form.

   Simeon Ball\\
   Departament de Matem\`atiques, \\
Universitat Polit\`ecnica de Catalunya, \\
Carrer Jordi Girona 1-3,\\
08034 Barcelona, Spain \\
   {\tt simeon@ma4.upc.edu} \\

Ricard Vilar\\
   Departament de Matem\`atiques, \\
Universitat Polit\`ecnica de Catalunya, \\
Carrer Jordi Girona 1-3,\\
08034 Barcelona, Spain \\
  {\tt ricard.vilar@upc.edu} \\

\end{document}